\title{What makes you unique?}
\author{

    Benjamin B. Seiler \\ Stanford University
    \and
    Masayoshi Mase\\Hitachi, Ltd.
    \and
    Art B. Owen \\ Stanford University
     
}
\date{October 2022}
\renewcommand{\le}{\leqslant}
\renewcommand{\ge}{\geqslant}
\renewcommand{\emptyset}{{\varnothing}}
\newtheorem{proposition}{Proposition}
\newcommand{\giv}{\!\mid\!}
\newcommand{\kld}{\,\Vert\,} % Kullback-Leibler divergence
\newcommand{\real}{\mathbb{R}}
\newcommand{\bsx}{\boldsymbol{x}}
\newcommand{\bsz}{\boldsymbol{z}}
\newcommand{\val}{\mathrm{val}}
\newcommand{\phz}{\phantom{0}}
\newcommand{\cx}{\mathcal{X}}
\newcommand{\ent}{\mathcal{H}}
\def\mca#1{\multicolumn{1}{c}{#1}}
\begin{document}
\maketitle
\begin{abstract}
This paper proposes a uniqueness Shapley measure
to compare the extent to which different 
variables are able to identify a subject.  
Revealing the value of a variable on subject
$t$ shrinks the set of possible subjects that $t$ could be.
The extent of the shrinkage depends on which other
variables have also been revealed.  We use
Shapley value to combine all of 
the reductions in log cardinality due to 
revealing a variable
after some subset
of the other variables has been revealed.
This uniqueness Shapley measure can be
aggregated over subjects where it becomes 
a weighted sum of conditional entropies. 
Aggregation over subsets of subjects can address questions like how identifying is age for people of a given zip code.  Such aggregates have a
corresponding expression in terms of cross entropies.
We use uniqueness Shapley to investigate
the differential effects of revealing
variables from the North Carolina voter
registration rolls and in identifying anomalous solar flares.  An enormous speedup (approaching 2000 fold in one example)
is obtained by using the all dimension
trees of \cite{moor:lee:1998} to store the
cardinalities we need.
\end{abstract}

\section{Introduction}

An individual data point, such as one representing a person,
can often be identified by specifying even a small
subset of its variables.
For instance, a large fraction of US residents are uniquely identified
by just their date of birth, zip code, and gender
\citep{swee:2000,goll:2006}.
The website \url{https://amiunique.org/} will examine
some signature variables in your browser and report whether
you are uniquely identified among the millions of
participants.  See \cite{gomez2018hiding} for a description.

Variables are not equally powerful for the purpose of
identifying an individual, and the variables that
provide the most information for one person might not
be very informative for another.
It can also happen that two or more variables specified
together can be much more identifying
than we might surmise given their individual strengths
for identifying people. The joint specification can
also be less informative due to associations between
variables such as near duplicates.

In this paper we propose a way to measure how important a variable is for identifying one specific subject in a set of data.  Our definition of importance is based on Shapley value from
economic game theory \citep{shap:1953}.   An important variable is one that, when revealed for a subject, greatly reduces the number of subjects who could match it.  This measure takes account of which other variables might also have been revealed, and so it depends on the full joint distribution
in the data set not just one or a few marginal distributions of the data.  It does not assume that any set of variables necessarily provides a unique identification of the subject, as might fail to happen for twins.

The game theoretic formulation provides a principled way to aggregate subject-specific importances to variable importance measures for the whole data set or for subsets of special interest, using the additivity property of Shapley value.
The measure is an extension of the cohort Shapley measure
that \cite{mase:owen:seil:2019} use to quantify variable
importance for black box functions.

Other things being equal, a more
identifying variable is one that is more worth concealing
for privacy purposes, or more valuable for personalization.
That said, our measure is not designed for settings like
differential privacy \citep{dwor:2008} where one
seeks privacy guarantees.
We use it instead for exploratory purposes.

An outline of the paper is as follows. 
Section~\ref{sec:notation} introduces our notation, reviews Shapley value and defines the uniqueness Shapley values of each input variable for a given subject. Variables with greater Shapley value are more identifying.  
Section~\ref{sec:entropy} shows that the uniqueness Shapley value can be related to an entropy measure.  When aggregated to an entire data set, the uniqueness Shapley value for variable $j$ is a weighted sum of the conditional empirical  entropies of variable $j$ given all subsets of variables not including $j$. When we aggregate only over a proper subset of subjects the resulting Shapley value expression replaces entropies by cross-entropies linking the empirical distribution on the subset to the full data set.
A naive implementation of aggregated uniqueness Shapley value will have a cost that is quadratic in the number of subjects. Section~\ref{sec:adtrees} describes the all dimension trees of \cite{moor:lee:1998} that we have found give an enormous speedup making the difference between feasible and infeasible computation in some of our examples.
Section~\ref{sec:flares} explores a solar flare dataset from \cite{dua:graf:2019}. We treat solar regions with the most extreme and potentially dangerous flares as anomalies and then, as a step towards explaining those anomalies, look at which variables most identify them.
Section~\ref{sec:nc} looks at voter registration data from the state of North Carolina.  We compare the extent to which race, age, gender and other variables serve to identify voters.  
Section~\ref{sec:discussion} gives conclusions and 
discusses some further issues.

\section{Notation and background}\label{sec:notation}
We suppose that there are $d$ categorical variables
measured on each of $n$ subjects.  
Subject $i$ is described by a vector $\bsx_i$
with components $x_{ij}\in\cx_j$ for $j=1,\dots,d$.
The set of all subjects is denoted $1{:}n$ and the set of all variables is $1{:}d$.

To begin, suppose that there is a target
subject $t\in 1{:}n$, and we want to know what
variables identify subject $t$.
For every subset of variables $u\subseteq 1{:}d$
let $\bsx_{iu}$ be the tuple $(x_{ij})_{j\in u}$.
Then we define the cohort
\begin{align*}
C(u)&=C_t(u)=\{i\in 1{:}n\mid \bsx_{iu}=\bsx_{tu}\},
\end{align*}
of all subjects $i$ who match subject $t$ on every variable
in the set $u$. By convention, $C_t(\emptyset) = 1{:}n$
and no cohort is empty because they all include $t$.
The size of a cohort is the cardinality
$$
N(u) = N_t(u) = \sum_{i=1}^n1\{i\in C_t(u)\}.
$$

We consider variable $j$ to be important for identifying
subject $t$ if $N_t(u\cup\{j\})$ is typically much smaller
than $N_t(u)$ for $u\subseteq1{:}d\setminus\{j\}$.
In that case, knowledge of $x_{tj}$ refines the
cohort containing subject $t$ by a large factor.
There are $2^{d-1}$ different cohorts into which $\{j\}$
might be included, and we will use Shapley value to
incorporate them all.

To simplify some of our expressions, we
introduce some notational short forms.
The set $1{:}d\setminus \{j\}$ is 
written as $-j$. When $j\not\in u\subset 1{:}d$
then we write $u+j$ for $u\cup\{j\}$.
The set of all subsets of $1{:}d$  is
written $2^{1{:}d}$.
When $u$ is a finite set, then $|u|$ is its cardinality.
%The vectors $\bszero$ and $\bsone$ have all components
%equal to $0$ and $1$ (respectively).

\subsection{Shapley value}

The Shapley value from game theory can be used
to allocate value to the members of a team that produced the value.
In our context, the team will be made up of variables
whose values are specified, and the value will
be defined by how much subject $t$ is identified.

We work with a value function $\val\,{:}\,2^{1{:}d}\to\real$
where $\val(u)$ is the value created by the team $u$,
and we suppose that we are given $\val(u)$ for all $u\subseteq 2^{1{:}d}$.
The total value created by the team is $\val(1{:}d)$, and the
problem is to make a fair allocation to members $j=1,\dots,d$.
The fair share for member $j$ is denoted by $\phi_j$.
\cite{shap:1953} had these axioms:
\begin{compactenum}[\quad1)]
\item (Efficiency) $\sum_{j=1}^d \phi_j =\val( 1{:}d)$,
\item (Symmetry) if $\val(u+j)=\val(u+j')$ whenever $u\subseteq1{:}d\setminus\{j,j'\}$
then $\phi_j=\phi_{j'}$,
\item (Dummy) if $\val(u+j)=\val(u)$ whenever $u\subseteq-j$ then $\phi_j=0$ and
\item (Additivity) if two games have value functions $\val$ and $\val'$
and shares $\phi_j$ and $\phi'_j$ then the game with values
$\val(u)+\val'(u)$ must have shares $\phi_j+\phi'_j$.
\end{compactenum}
\smallskip

The fair shares depend strongly on the incremental value of variable $j$ given that variables $u\subseteq-j$ are already included.  It is convenient to use
\begin{align}\label{eq:incrementalvalue}
\val(j\giv u) \equiv \val(u+j)-\val(u)
\end{align}
to represent this incremental value.
\cite{shap:1953} finds that there is a unique set of shares $\phi_j$ (called Shapley values)
that satisfy his four axioms.  They are
\begin{align}\label{eq:shapval}
\phi_j = \frac1d\sum_{u\subseteq -j}{d-1 \choose |u|}^{-1}
\val(j\giv u).
\end{align}

Another way to describe $\phi_j$ is to build
a set from $\emptyset$ to $1{:}d$ by adding
the variables $j\in1{:}d$ in a random order.
At some point variable $j$ appears with a set $u\subseteq -j$
of previously introduced variables.
Then $\phi_j$ is the average of $\val(j\giv u)$
taken over all $d!$ variable orders.

We see from equation~\eqref{eq:shapval} that only value differences
affect $\phi_1,\dots,\phi_d$. It is often convenient to
take $\val(\emptyset)=0$.  If that does not hold we can
replace every $\val(u)$ by $\val(u)-\val(\emptyset)$
without changing any of the $\phi_j$.

\subsection{Uniqueness Shapley}

The value function we choose for
identifying subject $t$ is
\begin{align}\label{eq:uniquenessvalfun}
\val(u) = -\log_2\Bigl( \frac{N_{t}(u)}{N_t(\emptyset)}\Bigr).
\end{align}
This definition satisfies $\val(\emptyset)=0$.
The smaller the cardinality $N_t(u)$ of $C_t(u)$, the more $\bsx_{t,u}$
has served to identify subject $t$.
One unit of $\val(\cdot)$ corresponds to information
that halves the size of the cohort containing subject $t$.
We quantify the importance of $x_{tj}$ to the identifiability of subject $t$
via the Shapley value $\phi_{j} =\phi_{t,j}$ derived
from the value function in~\eqref{eq:uniquenessvalfun}.
The extent to which revealing $x_{tj}$ identifies subject $t$
depends on any previously identified variables $\bsx_{tu}$
for $u\subseteq-j$.  The uniqueness Shapley value
combines all $2^{d-1}$ of these contributions in a way
consistent with game theory.
Those contributions take the form
$$
\val(j\giv u) =\log_2\Bigl( \frac{N_{t}(u)}{N_t(u+j)}\Bigr)
$$
after cancellation of $\log_2(N_t(\emptyset))$.

The uniqueness Shapley value function is the cohort Shapley
value function of
\cite{mase:owen:seil:2019} after the within-cohort average of a response
variable is replaced by the cardinality of the cohort.

\begin{proposition}
The uniqueness Shapley value $\phi_{t,j}$ satisfies
$\phi_{t,j}\ge0$ with $\phi_{t,j}=0$ if and
only if $x_{ij}=x_{tj}$ for all $i=1,\dots,n$.
\end{proposition}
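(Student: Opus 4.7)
The plan is to use the explicit Shapley formula from equation~\eqref{eq:shapval} together with the monotonicity of cohorts under set inclusion, reducing everything to a statement about the sign and vanishing of each increment $\val(j\giv u)$.

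First I would observe the key monotonicity fact: whenever $j\notin u$, the cohort $C_t(u+j)$ is the subset of $C_t(u)$ consisting of those $i$ that additionally match $t$ on variable $j$. Hence $C_t(u+j)\subseteq C_t(u)$, so $N_t(u+j)\le N_t(u)$, and therefore
\[
\val(j\giv u) \;=\; \log_2\!\Bigl(\frac{N_t(u)}{N_t(u+j)}\Bigr) \;\ge\; 0.
\]
Since every binomial weight $\binom{d-1}{|u|}^{-1}/d$ in~\eqref{eq:shapval} is strictly positive, $\phi_{t,j}$ is a positive combination of nonnegative terms, establishing $\phi_{t,j}\ge 0$.

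For the equality characterization, the positivity of the weights means $\phi_{t,j}=0$ if and only if every single increment $\val(j\giv u)$ vanishes, i.e.\ $N_t(u+j)=N_t(u)$ for all $u\subseteq -j$. For the forward direction I would instantiate this at $u=\emptyset$: then $N_t(\emptyset)=n$ while $N_t(\{j\})=|\{i:x_{ij}=x_{tj}\}|$, so equality forces $x_{ij}=x_{tj}$ for every $i\in 1{:}n$. For the converse, if $x_{ij}=x_{tj}$ for all $i$, then for any $u\subseteq -j$ the defining condition $\bsx_{i,u+j}=\bsx_{t,u+j}$ is equivalent to $\bsx_{iu}=\bsx_{tu}$, so $C_t(u+j)=C_t(u)$, every increment is zero, and $\phi_{t,j}=0$.

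There is no real obstacle here; the only thing to be careful about is that the Shapley weights are all positive so that vanishing of the sum is equivalent to vanishing of each summand, and that a single well-chosen $u$ (namely $u=\emptyset$) is enough to extract the ``all subjects agree with $t$ on variable $j$'' condition from the collection of equalities.
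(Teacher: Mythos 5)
Your proposal is correct and follows essentially the same route as the paper: monotonicity of cohorts gives $N_t(u+j)\le N_t(u)$ and hence nonnegativity, and the equality case is pinned down by the increment at $u=\emptyset$ (the paper phrases this direction as a contradiction, noting no negative increments can compensate, but it is the same observation that positive weights and nonnegative summands force every increment to vanish). No gaps.
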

\begin{proof}
If $j\not\in u$, then $N_t(u+j)\le N_t(u)$ and from this
we find that $\phi_{t,j}\ge0$.
If $x_{ij}=x_{tj}$ for all $i$ then $C_t(u+j)=C_t(u)$
for $j\not \in u$ making $N_t(u)=N_t(u+j)$
and $\val(u+j)-\val(u)=0$.
Conversely, suppose that $\phi_{t,j}=0$
but $x_{ij}\ne x_{tj}$ for some $i$.
Then $N_t({\{j\}})>N_t(\emptyset)$
from which $\val(\{j\}\giv\emptyset)>0$.
This provides a contradiction because there
cannot be any compensating negative value differences
to bring the Shapley value down to zero.
\end{proof}

Now suppose that we want to quantify the importance
of variable $j$ to the whole set of subjects.  The
additivity axiom of Shapley value makes it natural
to sum those values. For interpretability we scale that
sum to an average over subjects
taking
\begin{align}\label{eq:totalvalue}
\phi^{1:n}_j = \frac1n\sum_{t=1}^n\phi_{t,j}
\end{align}
as our global measure of the cardinality importance of variable $j$.

For an arbitrary non-empty subset $v\subseteq1{:}n$ of subjects we can also define
\begin{align}\label{eq:subsetvalue}
\phi^{v}_j = \frac1{|v|}\sum_{t\in v}\phi_{t,j}.
\end{align}
Suppose for instance that $x_{ij}$ encodes the gender of subject $i$.
We can then define a set $v$ consisting of all the subjects with
one of the genders in the data set and then $\phi^v_j$
describes the extent to which gender identifies people
of the given gender.  This is not necessarily 
zero even though gender is constant over $i\in v$
because the Shapley values $\phi_{t,j}$ are defined 
on the entire
subject set $1{:}n$.  For some other feature 
$j'$ such as age in years
we can then measure the extent to which $j'$ 
identifies subjects of a given gender and see how 
this varies as
we change the set $v$ to each gender in turn.

 \section{Relationship to information theory}\label{sec:entropy}

It is natural to consider entropy
as a measure of how informative a feature is for identification. \cite{gomez2018hiding}
report entropy values for individual variables
in their browser fingerprint data. Here we introduce
some information theoretic quantities and show that
uniqueness Shapley value aggregated over subjects
is equivalent to entropy when
the features are independent.
More generally, aggregating the uniqueness Shapley 
measure yields a linear
combination of conditional entropies.  Aggregates over
proper subsets of subjects involve cross-entropies. 

For a categorical variable $\bsx\in\cx$ with $\bsx\sim p$ we write the entropy of both $\bsx$ and $p$ as $H(\bsx)=H(p)=-\sum_{\bsx\in\cx}p(\bsx) \log_2(p(\bsx))$.
For disjoint $u,\tilde u\subset1{:}d$ the conditional entropy of $\bsx_u$ given $\bsx_{\tilde u}$ is
\begin{align*}
H(\bsx_u\giv\bsx_{\tilde u})&=\sum_{\bsz_{\tilde u}\in\cx_{\tilde u}}p(\bsz_{\tilde u})
H(\bsx_u\giv\bsx_{\tilde u}=\bsz_{\tilde u}),\quad\text{where}\\
H(\bsx_u\giv \bsx_{\tilde u}=\bsz_{\tilde u})&=-\sum_{\bsx_u\in\cx_u}p(\bsx_u\giv\bsx_{\tilde u}=\bsz_{\tilde u})\log_2(p(\bsx_u\giv\bsx_{\tilde u}=\bsz_{\tilde u})).
\end{align*}
By the chain rule for entropy \citep[Theorem 2.2.1]{cove:thom:2006}
$$H(\bsx_u\giv\bsx_{\tilde u})=H(\bsx_{u\cup \tilde u})-H(\bsx_u).$$
We will work with the entropy of sub-vectors of $\bsx$ and for this we write $\ent(u)=\ent(u;\bsx)=H(\bsx_u)$ when the distribution of $\bsx_u$ is understood from context. Similarly, $\ent(u\giv \tilde u)$ denotes $H(\bsx_u\giv\bsx_{\tilde u})$. 
For $j\not\in u$ we may abbreviate 
the conditional entropy $\ent(\{j\}\giv u)$ 
to $\ent(j\giv u)$.

\subsection{Relationship to entropy}
Let $p(\cdot)$ be the empirical distribution
on $\cx$ with
\begin{align}\label{eq:empirical}
p(\bsx) = \frac1n\sum_{i=1}^n1\{\bsx=\bsx_i\}.
\end{align}
For $j\in1{:}d$, let $p_j$ be the marginal distribution of $\bsx_j$
under~\eqref{eq:empirical} and for $u\subseteq1{:}d$
let $p_u$ be the marginal distribution of $\bsx_u$ under~\eqref{eq:empirical}.

We say that two or more variables are independent in the
data if they are independent random variables under~\eqref{eq:empirical}.
Exact independence is quite unlikely to occur but it provides
an interpretable baseline via Shannon's entropy.

\begin{proposition}\label{prop:itsentropy}
Suppose that $\bsx_{\{j\}}$ is independent of $\bsx_{-j}$ under~\eqref{eq:empirical}.
Then
$$
\phi^{1:n}_j = \ent(\{j\}). 
%-\sum_{x\in\cx_j}p_j(x) \log_2(p_j(x))
$$
for $\bsx$ with the empirical distribution~\eqref{eq:empirical}.
\end{proposition}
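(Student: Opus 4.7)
The plan is to average the Shapley formula over subjects, swap the order of summation, and then exploit the independence assumption to collapse the resulting sum over $u$.

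First I would write
\begin{align*}
\phi^{1:n}_j = \frac{1}{d}\sum_{u\subseteq -j}\binom{d-1}{|u|}^{-1}\cdot\frac{1}{n}\sum_{t=1}^n \log_2\frac{N_t(u)}{N_t(u+j)}
\end{align*}
by combining \eqref{eq:shapval}, \eqref{eq:uniquenessvalfun}, and \eqref{eq:totalvalue}. The crucial interpretive step is to recognize that $N_t(u)/n = p_u(\bsx_{tu})$ and $N_t(u+j)/n = p_{u+j}(\bsx_{t,u+j})$, so the ratio inside the log equals $1/p(x_{tj}\giv \bsx_{tu})$, where $p$ denotes the empirical conditional distribution. (If $u=\emptyset$ this reduces to $1/p_j(x_{tj})$.)

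The independence hypothesis is exactly what is needed to make $p(x_{tj}\giv \bsx_{tu})=p_j(x_{tj})$ for every $u\subseteq -j$ and every $t$, so the inner log no longer depends on $u$. Then the average over $t$ becomes
\begin{align*}
\frac{1}{n}\sum_{t=1}^n -\log_2 p_j(x_{tj}) = -\sum_{x\in\cx_j} p_j(x)\log_2 p_j(x) = \ent(\{j\}),
\end{align*}
by grouping subjects according to the value of $x_{tj}$.

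It remains to observe that the combinatorial weights sum to $d$: grouping $u\subseteq -j$ by cardinality $k$ gives $\sum_{k=0}^{d-1}\binom{d-1}{k}\binom{d-1}{k}^{-1}=d$, so the $1/d$ prefactor cancels and we are left with $\phi^{1:n}_j=\ent(\{j\})$. The only step that requires any real thought is the identification of $N_t(u)/N_t(u+j)$ with a reciprocal conditional empirical probability, and then noticing that independence makes this quantity depend on $t$ only through $x_{tj}$; once that is in hand the rest is bookkeeping.
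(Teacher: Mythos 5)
Your proposal is correct and follows essentially the same route as the paper: both arguments hinge on the observation that independence makes the incremental value $\log_2(N_t(u)/N_t(u+j)) = -\log_2(p_j(x_{tj}))$ constant over $u\subseteq -j$, after which averaging over subjects yields the entropy. The only cosmetic difference is that you verify explicitly that the weights $\frac1d\binom{d-1}{|u|}^{-1}$ sum to one, whereas the paper invokes the permutation-average characterization of Shapley value to reach the same conclusion.
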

\begin{proof}
Consider subject $t$.  Because $\bsx_{\{j\}}$ is independent of $\bsx_{-j}$
we find that $\bsz_{t,\{j\}}$ is independent of $\bsz_{t,-j}$.
This means that for all $u\subseteq-j$,
$$\frac{N_{t}(u+j)}{N_t(u)}
=p_j(x_{tj})$$
and then $\phi_{t,j} = -\log_2(p_j(x_{tj}))$ based on its
expression as an average over permutations of incremental values.
Now aggregating over subjects,
$$
\phi_j^{1:n} = \frac1n\sum_{t=1}^n-\log_2(p_j(x_{tj}))
=-\sum_{x\in\cx_j}p_j(x)\log_2(p_j(x)).\qedhere
$$
\end{proof}

As usual, the proper interpretation of $0\log_2(0)$ is zero.
If all the variables are independent, then
they all have a uniqueness Shapley value
equal to their entropy.
The connection to entropy goes further.
\begin{proposition}
The global uniqueness Shapley value for variable $j$ is
\begin{align}\label{eq:shapvalent}
\phi^{1{:}n}_j 
&= \frac1d\sum_{u\subseteq -j}{d-1 \choose |u|}^{-1}
\ent(j\giv u)
\end{align}
where the conditional entropies are computed for a random vector $\bsx$ with the empirical distribution~\eqref{eq:empirical}.
\end{proposition}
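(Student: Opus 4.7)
The plan is to start from the per-subject Shapley formula~\eqref{eq:shapval} applied to the value function $\val(u)=-\log_2(N_t(u)/N_t(\emptyset))$, sum over $t\in 1{:}n$, interchange the order of summation, and then recognize each inner average as a conditional entropy under the empirical distribution~\eqref{eq:empirical}.

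First I would write
\begin{align*}
\phi^{1:n}_j
&=\frac1n\sum_{t=1}^n\phi_{t,j}
=\frac1d\sum_{u\subseteq -j}\binom{d-1}{|u|}^{-1}\cdot\frac1n\sum_{t=1}^n\val(j\giv u),
\end{align*}
where $\val(j\giv u)=\log_2(N_t(u)/N_t(u+j))$. The swap is valid because the Shapley weights $\binom{d-1}{|u|}^{-1}/d$ do not depend on $t$, and every ratio is finite because $t\in C_t(u+j)$ forces $N_t(u+j)\ge1$.

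Next, the key identification step. Observing that $N_t(u)/n=p_u(\bsx_{tu})$ and $N_t(u+j)/n=p_{u+j}(\bsx_{t,u+j})$ under the empirical distribution, we get
\begin{align*}
\log_2\!\Bigl(\frac{N_t(u)}{N_t(u+j)}\Bigr)
=-\log_2 p(x_{tj}\giv \bsx_{tu}).
\end{align*}
Therefore
\begin{align*}
\frac1n\sum_{t=1}^n\val(j\giv u)
&=-\frac1n\sum_{t=1}^n\log_2 p(x_{tj}\giv\bsx_{tu})\\
&=-\sum_{\bsz_{u+j}}p_{u+j}(\bsz_{u+j})\log_2 p(\bsz_j\giv \bsz_u),
\end{align*}
after grouping the $n$ subjects by the value of $\bsx_{t,u+j}$. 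The right-hand side is exactly the conditional entropy $H(\bsx_j\giv\bsx_u)=\ent(j\giv u)$ under $p$, by the definition recalled just before the proposition.

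Substituting this identity back into the swapped sum gives~\eqref{eq:shapvalent}. The only subtlety worth double-checking is handling of zero probabilities: terms with $p_{u+j}(\bsz_{u+j})=0$ contribute no subjects to the empirical average and are assigned the conventional value $0\log_2 0=0$ in the entropy, so the two sides genuinely agree rather than just formally agreeing. I do not expect any real obstacle; the argument is essentially linearity of Shapley combined with the standard observation that an empirical average of $-\log_2$ of a conditional empirical probability equals a conditional empirical entropy.
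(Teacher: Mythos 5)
Your proof is correct and follows essentially the same route as the paper's: both hinge on the identification $N_t(u)/n=p_u(\bsx_{tu})$ under the empirical distribution together with linearity of the Shapley formula. The only cosmetic difference is that the paper averages the value function itself to get $\val^{1:n}(u)=\ent(u)$ up to an additive constant and then invokes the chain rule $\ent(u+j)-\ent(u)=\ent(j\giv u)$, whereas you average the increment $\log_2(N_t(u)/N_t(u+j))$ directly and expand it as the conditional entropy.
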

\begin{proof}
For $t\in1{:}n$ we have
$N_t(u)/N_t(\emptyset)=N_t(u)/n=p_{u}(\bsx_u)$. Then
$$
\val^{1:n}(u) = \frac1n\sum_{t=1}^n-\log_2(N_t(u)/n)=\ent(u)+\log_2(n)
$$
and so $\val^{1:n}(j\giv u)=\val^{1:n}(u+j)-\val^{1:n}(u)=\ent(u+j)-\ent(u)=\ent(j\giv u)$.
\end{proof}

Because conditional entropies are non-negative and
$\val(1{:}d)$ is the entropy of $\bsx$ under $p$, we have the bracketing inequality
\begin{align}\label{eq:bracketing}
\frac{\ent(\{j\})}d\le \phi^{1{:}n}_j\le\ent(1{:}d).
\end{align}
Noting that $\ent(\emptyset)=0$
we find for $d=2$ that
\begin{align}\label{eq:phi1of2}
\phi^{1{:}n}_1 = \frac12\ent(\{1\})+\frac12\ent(\{1\}\giv\{2\})
\end{align}
with $\phi^{1{:}n}_2$ found by switching indices.
For $d=3$,
\begin{align*}
\phi^{1{:}n}_1 &= \frac13\ent(\{1\})
+\frac16\ent(\{1\}\giv\{2\})
+\frac16\ent(\{1\}\giv\{3\})
+\frac13\ent(\{1\}\giv \{2,3\}).
\end{align*}

It may seem counterintuitive that larger entropy corresponds to greater power to identify subjects.  If a variable takes two levels, say 0 and 1, then the distribution with greatest entropy is the one that gives them each probability $0.5$.  Revealing that variable provides `1 bit' of cohort reduction. If instead there is a 90:10 split for some variable then 10\% of the population find that their cohort size is greatly reduced by 10-fold ($\log_2(10)\approx 3.3$) but 90\% find their cohort size reduced by the
much lower amount, $1/0.9$. This is about 11\%  and $\log_2(1/0.9)\approx0.15$, so the average number of bits is $0.1\times 3.3+0.9\times 0.15\approx0.46$.

The largest possible global uniqueness Shapley
value for a binary predictor variable $x_j$
is $\phi^{1:n}_j=-\log_2(1/2)=1$.  In the
solar flare example of Section~\ref{sec:flares} we will see
$\phi^v_j>1$ for a binary predictor $x_j$
and a set $v$ of anomalous observations.

\subsection{Relationship to cross entropy}

For two distributions $p(\bsx)$ and $q(\bsx)$ on a discrete set $\cx$ the relative entropy (Kullback-Leibler distance) from $p$ to $q$ is
$$
D(q\kld p)=\sum_{x\in\cx}q(\bsx)\log_2\Bigl(\frac{q(\bsx)}{p(\bsx)}\Bigr).
$$
Similarly, the cross entropy of $p$ relative to $q$ is
$$
H(p,q) = -\sum_{\bsx\in\cx}q(\bsx)\log_2(p(\bsx))
=H(q)+D(q\kld p).
$$
It is very common to write these expressions with symbols $p$ and $q$ reversed, but in our setting, the second argument needs to be the empirical distribution from~\eqref{eq:empirical} that we have labeled $p$.
For distribution $q$, we use marginal
distributions $q_j$, $q_u$ and $q_{u+j}$ analogous
to the quantities that we have used previously
for $p$.

\begin{proposition}
The uniqueness Shapley values for subset $v$ of subjects at~\eqref{eq:subsetvalue} can be written
\begin{align}\label{eq:shapcross}
\phi^v_j = \frac1d\sum_{u\subseteq -j}{ d-1\choose |u|}^{-1}\bigl( H(p_{u+j},q_{u+j})-H(p_u,q_u)\bigr)
\end{align}
where $q$ is the uniform distribution on $\bsx_t$ for $t\in v$.
\end{proposition}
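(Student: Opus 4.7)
The plan is to mimic the argument for the global case (Proposition on entropy) but average over a generic subset $v$ and recognize the resulting quantity as a cross entropy rather than an entropy. I would proceed in four short steps.

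First, I would invoke linearity: since $\phi_{t,j}$ is a Shapley value derived from the value function $\val(u)=-\log_2(N_t(u)/n)$ (dropping the irrelevant $\log_2(n)$ constant), averaging the closed-form expression~\eqref{eq:shapval} over $t\in v$ and exchanging the order of summation yields
\[
\phi^v_j = \frac1d\sum_{u\subseteq -j}\binom{d-1}{|u|}^{-1}\val^v(j\giv u),
\]
where $\val^v(u) = \frac1{|v|}\sum_{t\in v}\val(u)$ and $\val^v(j\giv u) = \val^v(u+j) - \val^v(u)$.

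Second, I would translate the cohort cardinalities into empirical probabilities. Just as in the entropy proposition, $N_t(u)/n = p_u(\bsx_{tu})$, so
\[
\val^v(u) = -\frac1{|v|}\sum_{t\in v}\log_2(p_u(\bsx_{tu})).
\]
Third, I would identify this as a cross entropy. The distribution $q$ places mass $1/|v|$ on each $\bsx_t$, $t\in v$ (with multiplicities if subjects coincide), so its $u$-marginal satisfies $q_u(\bsz)=\frac1{|v|}\sum_{t\in v}1\{\bsx_{tu}=\bsz\}$. Summing over distinct values of $\bsz$ therefore gives
\[
H(p_u,q_u) = -\sum_{\bsz\in\cx_u}q_u(\bsz)\log_2(p_u(\bsz)) = -\frac1{|v|}\sum_{t\in v}\log_2(p_u(\bsx_{tu})) = \val^v(u).
\]
Finally, substituting into the Shapley expansion from step one turns each increment $\val^v(j\giv u)$ into $H(p_{u+j},q_{u+j})-H(p_u,q_u)$, which is precisely~\eqref{eq:shapcross}.

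I do not foresee a serious obstacle; the only point requiring care is the bookkeeping when subjects in $v$ share covariate values, so that $q$ may have atoms of mass larger than $1/|v|$. Writing the cross entropy as a sum over $t\in v$ rather than over the support of $q_u$ (as in step three) handles this transparently, and the $t=t'$ coincidences cause no trouble because the summands are identical.
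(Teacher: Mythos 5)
Your proposal is correct and follows essentially the same route as the paper: exchange the average over $t\in v$ with the Shapley sum, rewrite $N_t(u)/n$ as $p_u(\bsx_{tu})$, and recognize $-\frac1{|v|}\sum_{t\in v}\log_2(p_u(\bsx_{tu}))$ as the cross entropy $H(p_u,q_u)$. Your handling of repeated covariate values (summing over $t\in v$ with weight $1/|v|$ rather than over the support of $q_u$) is just a cleaner phrasing of the paper's explicit division by the multiplicity $N^v_t(w)$.
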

\begin{proof}
First, by definition
\begin{align*}
\phi^v_j &=\frac{1}{|v|}\sum_{t\in v}\frac1d\sum_{u\subseteq -j}{ d-1\choose |u|}^{-1}\val(j\giv u) \\
&=\frac1d\sum_{u\subseteq -j}{ d-1\choose |u|}^{-1}\frac{1}{|v|}\sum_{t\in v}\log_2\left(\frac{N_t(u)}{n_t(u+j)}\right).
\end{align*}

Next, for a set $w\subseteq 1{:}d$,
$$H(p_w,q_w) = -\sum_{\bsx\in\cx}q_w(\bsx)\log_2(p_w(\bsx)).$$
This sum can be rewritten over $t\in v$ as long as we divide out the multiplicity for each $\bsx$ among the $\bsx_t$ for $t\in v$ yielding
$$H(p_w,q_w) = -\sum_{t\in v}\frac{1}{N^v_t(w)}q_w(\bsx_t)\log_2(p_w(\bsx_t)).$$
Here $N^v_t(w)=\sum_{i \in v}1\{i\in C_t(w)\}$ is the number of subjects in $v$ that match $\bsx_t$ on the features in $w$.  As $q_w$ is the uniform distribution over $\bsx_t$ for $t\in v$, $q_w(\bsx_t)=(N^v_t(w)/|v|)$, i.e., the proportion of subjects in $v$ that match $\bsx_t$ on the features in $w$. Therefore
$$H(p_w,q_w) = -\sum_{t\in v}\frac{1}{|v|}\log_2\left(\frac{N_t(w)}{n}\right).$$

With this formulation we can see that
\begin{align*}
    & \frac1d\sum_{u\subseteq -j}{ d-1\choose |u|}^{-1}\left(H(p_{u+j},q_{u+j})-H(p_u,q_u)\right)\\
   = & \frac1d\sum_{u\subseteq -j}{ d-1\choose |u|}^{-1}\left(\left(-\sum_{t\in v}\frac{1}{|v|}\log_2\left(\frac{N_t(u+j)}{n}\right)\right)-\left(-\sum_{t\in v}\frac{1}{|v|}\log_2\left(\frac{N_t(u)}{n}\right)\right)\right)\\
   = &\frac1d\sum_{u\subseteq -j}{ d-1\choose |u|}^{-1}\frac{1}{|v|}\sum_{t\in v}\log_2\left(\frac{N_t(u)}{n_t(u+j)}\right) = \phi^v_j.\qedhere
\end{align*}
\end{proof}

\subsection{Duplicate and redundant variables}
If some other variable $j'$ is
equivalent to variable $j$, then it must have the
same uniqueness Shapley value.  The extreme version of this is that
the second variable could have been copy-pasted from
the first one by accident.  More plausibly there could
be two variables like a person's email address and cell phone
number that have very nearly a one to one relationship
in a data set of transactions.

Let's look at $\phi_1^{1:3}$ in the special
case where all $x_{i3}=x_{i1}$.
Then $\ent(\{1,3\})=\ent(\{1\})=\ent(\{3\})$
and $\ent(\{1,2\})=\ent(\{2,3\})=\ent(\{1,2,3\})$
and we find that
\begin{align*}
\phi^{1{:}s}_1 &= \frac13\ent(\{1\})
+\frac16\ent(\{1\}\giv\{2\}).
\end{align*}
This is strictly smaller than
the value in~\eqref{eq:phi1of2} unless $\bsx_{\{1\}}$ is constant in the data
in which case both are zero.

Next we consider the effect of introducing this duplicate on $\phi_2^{1:n}$.
We get
\begin{align*}
\phi^{1{:}n}_2 &= \frac13\ent(\{2\})
+\frac16\ent(\{2\}\giv\{1\})
+\frac16\ent(\{2\}\giv \{3\})
+\frac13\ent(\{2\}\giv \{1,3\})\\
& = \frac13\ent(\{2\})
+\frac23\ent(\{2\}\giv\{1\}).
\end{align*}
The coefficient of $\ent(\{2\})$ has decreased
from $1/2$ to $1/3$ while the coefficient
of $\ent(\{2\}\giv\{1\})$ has increased
from $1/2$ to $2/3$.
% I think this means the duplicate of variable 1 could either increase OR decrease the importance of variable 2

A redundant variable is one that can be perfectly identified based on some subset of other variables.  Like a duplicated variable, the redundant ones do not get a uniqueness Shapley value of zero.

\subsection{Database key}

Suppose that one of the variables uniquely identifies each
subject.  That is, we never have $x_{ij}=x_{tj}$
unless $i=t$.  We can think of this as the key variable
in a database or even the row number in a data frame.
The presence of this variable 
forces $\val(1{:}d)=\log_2(n)$ (it's largest possible value)
for every subject.
For each subject,  $N_t(u)=1$ whenever the key is in $u$.

We can look at $\val(j\giv u)$
where $u\subseteq 1{:}d$ is a set of
variables and now suppose that we introduce
a database key with index $j=0$.
For each of the prior $d!$ orders in which
variable $j\ge1$ could have been included
there are $d+1$ positions at which the new
variable $j=0$ could be introduced.
If variable $0$ is introduced after variable $j$
then the incremental value is unchanged.
If variable $0$ is introduced before variable $j$
then the new incremental value for variable $j$ is $0$.
Therefore introducing the key
changes the uniqueness Shapley value for
one subject to
\begin{align}\label{eq:shapvaliwithkey}
\phi_j &= \frac1d\sum_{u\subseteq -j}{d-1 \choose |u|}^{-1}
\Bigl(\frac{d-|u|}{d+1}\Bigr)\val(j\giv u)\notag\\
&=\frac1{d+1}\sum_{u\subseteq -j}{d\choose |u|}^{-1}
\val(j\giv u)
\end{align}
because $d-|u|$ of the $d+1$ possible insertion points
preserve the incremental value while the others remove it.
Equation~\eqref{eq:shapvaliwithkey} holds for $v\subseteq1{:}n$
with $\phi_{t,j}$ corresponding to $v=\{t\}$.
The sum in~\eqref{eq:shapvaliwithkey} is taken over subsets $u$ of the original $d$ variables exclusive of both variable $j$ and the posited key variable $0$.

After introducing the key variable, the contribution of $\val(j\giv u)$ is downweighted by a factor of $(d-|u|)/(d+1)$, which ranges from $d/(d+1)$ for $u=\emptyset$ to $1/(d+1)$ for $u=-j$. The average value of this factor over sets $u\subseteq-j$ is
$$
\frac1d\sum_{r=0}^{d-1}\frac{d-r}{d+1}=\frac12.
$$
Other things being equal we might expect that introducing the database key will halve the other variables' uniqueness Shapley values.  Variables that get their importance mostly from $\val(j\giv u)$ for small $|u|$ are less affected by the key, and variables that get their importance mostly from $\val(j\giv u)$ for large $|u|$ will lose more than half of their uniqueness Shapley values.

% \subsection{Marichal entropy}
% \cite{mari:roub:2000} present two notions of entropy based on Shapley value.
% Both of them are at the aggregate level and not specific to individual subjects.
% They pertain to fuzzy measures $\mu$ on the set $1{:}d$ of indices for which  $\mu(\emptyset)=0$, $\mu(1{:}d)=1$ and $\mu(u')\ge\mu(u)$ whenever $u\subseteq u'$.
% Given a set of non-negative Shapley values $\phi_j$ summing to one, derived from the fuzzy measure, the upper Marichal entropy is
% $$
% -\sum_{j=1}^d \phi_j\log_2(\phi_j).
% $$
% That is, upper Marichal entropy is the entropy of the 
% Shapley values $(\phi_1,\dots,\phi_d)$ when those 
% values form a probability distribution.
% It is thus an entropy of Shapley values, not a Shapley value defined by incremental entropies as we have here.
% The upper Marichal entropy is
% $$
% \frac1d\sum_{j=1}^d\sum_{u\subseteq -j} {d-1\choose |u|}^{-1}h( \mu(u+j)-\mu(u))
% $$
% where $h(x)=-x\log_2(x)$.  It is a Shapley value 
% where the incremental contribution from $j$ is the 
% entropy of a difference of probabilities, 
% not a difference of entropies.

\section{All dimension trees}\label{sec:adtrees}

It would require $O(n|u|)$ time to compute $N_t(u)$ by naively checking which subjects match the target $t$ on a particular subset of features $u$.  For a naive calculation of the uniqueness Shapley, we would therefore need to compute $N_t(u)$ $O(2^d)$ times for each combination of feature and subjects to calculate each $\phi_{j}^{t}$.  Therefore, the full run time of the naive implementation of uniqueness Shapley would be $O(n^2 d^22^d)$.  For a reasonably large number of subjects, the $n^2$ factor can be computationally prohibitive even when $d$ is not large.  To improve upon the naive implementation, we employ a more suitable data structure: the all dimension tree of \cite{moor:lee:1998}.

The all dimension tree is optimized for tasks similar to calculating $N_t(u)$, i.e., generating contingency tables.  The basic structure takes categorical data and constructs a tree where each branch corresponds to a particular ``feature equals value" query, and the subsequent node stores the count of all subjects for whom that query and all preceding queries in the tree are true.  To compute $N_t(u)$ for a given $t,u$ pair, you start at the root of the tree and follow the branches corresponding to queries $x_j=x_{tj}$ for all $j \in u$, and the count stored in the resulting node would be $N_t(u)$.  Therefore, with such a tree structure, we only require $O(|u|)$ time to compute $N_t(u)$ which no longer scales with the number of subjects.  Note that the same tree can be used for all subjects, so it need not be constructed more than once for a single data set.

A naive version of this structure would generally require a prohibitive amount of memory, thereby rendering the computational savings moot even for relatively small $d$.  The innovation of the all dimension tree comes from its techniques to reduce memory requirements especially in the common cases of sparsity and correlated features.  They use several different methods to achieve this goal which are not relevant to the scope of this discussion except that they succeed in reducing the memory cost.  For example, for binary features, the memory requirement is $O(2^d)$ in the worst case compared to $O(3^d)$ in the dense naive implementation, and it can achieve $O(d)$ in the best case.  Performance closer to the best case is achieved when features are distributed more unevenly and are more correlated, resulting in a sparser distribution.  The time to initially construct the all dimension tree is also linear in $n$ and while worst-case exponential in $d$, it is not worse than our dependence to $d$ in the Shapley calculation.  This makes our overall implementation of uniqueness Shapley using all dimension trees $O(n d^22^d)$.  A speed up by a factor of $n$.  We use the all dimension tree Python implementation developed by \cite{ding:2018}.

\begin{algorithm}[t]
	\caption{Uniqueness Shapley Pseudo Code} 
	\begin{algorithmic}[1]
	    \State input: feature matrix X
	    \State T=ADTree(X)
	    \State Shap=zeros(n,d)
		\For {subject $i=1,2,\ldots,n$}
			\For {feature $j=1,2,\ldots,d$}
			    \For {$u\subseteq -j$}
			        \State $\gamma=\frac{1}{d{d-1\choose|u|}}$
			        \State $N_1$=T.query(i,u)
			        \State $N_2$=T.query(i,u+j)
			        \State Shap[i,j]$+=\gamma\log(N_1/N_2)$
		    	\EndFor
		    \EndFor
		\EndFor
		\State Return Shap
	\end{algorithmic} 
\end{algorithm}

\begin{table}[t]
\centering
\begin{tabular}{lccrr} % Journals usually don't want vertical lines in tables
\toprule
   {Data}&\mca{n} &\mca{d} &\mca{ADTree}&\mca{Standard}\\\toprule
  {Solar Flare}& \phantom{7,53}1,066&9& 5.30&54.55
   \\
   \midrule
        {Dare County Census}&\phantom{7,0}30,921&5&  1.91 & 2,522.40
   \\
   \midrule
        {Durham County Census}&\phantom{7,}253,563&5&16.46 & 32,426.91
   \\
   \midrule
     {North Carolina Census}&7,538,125&5&362.51 & n/a
   \\
   \bottomrule
\end{tabular}
\caption{Run times in seconds for uniqueness Shapley for the solar flare data, the full North Carolina census data, and two specific counties of North Carolina census data.}
\label{tab:shaprun}
\end{table}

For a reasonably large number of features $d$, this implementation is no longer computationally feasible.  In those cases, a Monte Carlo approximation must be employed as in \cite{male:2013}.  The all dimension tree structure can also be adapted for large $d$ to reduce its memory dependence at the cost of only approximately calculating $N_t(u)$ which is also discussed in \cite{moor:lee:1998}.  Our present examples did not have such very large $d$.  An implementation of uniqueness Shapley can be found on our GitHub: \url{https://github.com/cohortshapley/uniquenessshapley}.

\section{Solar flare data example}\label{sec:flares}

Many data sets have a few entries that are anomalies, such as outliers.  There have been many efforts to detect anomalies and others to explain them.  
For a survey of anomaly detection, see \cite{chan:bane:kuma:2009}.
\cite{jaco:etal:2020} provide the Exathlon benchmark for anomaly explanation methods, aimed at time series. They include a method based on marginal entropies.

In this section we look at the solar flare data set from the UC Irvine repository \citep{dua:graf:2019}. Some of the solar flares
have been marked as unusual (anomaly detection).
We consider uniqueness Shapley as a way to 
understand which variables make the anomalous
data most unique. We must add that finding
an identifying variable for an anomaly is a
kind of association and is not necessarily causal.
For instance, an unusual person's social
security number is very identifying but 
is unlikely to be causal.

We use the second solar flare data set from the UC Irvine repository because it is said to be more reliable. It describes $n=1066$ regions on the surface of the sun. There are 10 categorical predictors and 3 responses indicating the number of common (C class), moderate  (M class) and severe (X class) solar flares in each region over a 24 hour period. Severe flares are 100 times as strong as moderate ones which in turn are 10 times as strong as common ones. The M class flares can cause radio blackouts or endanger astronauts. There are also numerical gradations within these classes. See \url{https://www.nasa.gov/mission_pages/sunearth/news/X-class-flares.html}.

Some of those solar regions are much more interesting than others. All but five of them had no severe flares. Of those five, one had two severe flares.   Four of the regions had three or more flares rated moderate or severe, so we consider those too. 
We will use uniqueness Shapley to study these anomalies in terms of the categorical predictors of the data set.
The first nine categorical predictor variables are described in Table~\ref{tab:flarevars}.  A tenth variable, about the area of the largest spot, was constant for all 1066 regions. We omit that variable
and work with $d=9$ others.

\begin{table}
\centering
\begin{tabular}{ll}
\toprule
Variable & Levels\\
\midrule
Modified Zurich class & (A, B, C, D, E, F, H)\\
Largest spot size & (X, R, S, A, H, K)\\
Spot distribution & (X, O, I, C)\\
Activity &(1 = reduced, 2 = unchanged)\\
Evolution & (1 = decay, 2 = no growth, 3 = growth)\\
Prior 24 hr activity  &(1 = no $\ge$ M1s, 2 = one $\ge$ M1, 3 = multiple $\ge$ M1s)\\
Historically-complex &(1 = yes, 2 = no)\\
This pass& (1 = yes, 2 = no)\\
Area & (1 = small, 2 = large)\\
\bottomrule
\end{tabular}
\caption{\label{tab:flarevars}
Nine solar flare predictor variables. The `this pass' variable answers the question: Did the region become historically complex on this pass across the sun's disk? Source:
\protect\url{https://archive.ics.uci.edu/ml/datasets/Solar+Flare}}
\end{table}
% 1. Code for class (modified Zurich class) (A,B,C,D,E,F,H)
% 2. Code for largest spot size (X,R,S,A,H,K)
% 3. Code for spot distribution (X,O,I,C)
% 4. Activity (1 = reduced, 2 = unchanged)
% 5. Evolution (1 = decay, 2 = no growth, 3 = growth)
% 6. Previous 24 hour flare activity code (1 = nothing as big as an M1, 2 = one M1, 3 = more activity than one M1)
% 7. Historically-complex (1 = Yes, 2 = No)
% 8. Did region become historically complex on this pass across the sun's disk (1 = yes, 2 = no)
% 9. Area (1 = small, 2 = large)
% 10. Area of the largest spot (1 = <=5, 2 = >5) 
The first two columns of Table~\ref{tab:solarshapley} show
entropy and cardinality
Shapley values for the 9 predictors
in the solar flare data.  
Zurich and large spot
are the most identifying.  Area is least 
identifying.  In this data, many of the 
uniqueness Shapley values are below their corresponding marginal entropy.

\begin{table}[t]
    \centering
    \begin{tabular}{lccccc}
\toprule
Variable & Entropy &Shapley&Common&Moderate&Severe  \\
\midrule        
Zurich & 1.64 & 1.37 & 1.72 & 1.62 & 1.81\\
Large Spot & 1.52 & 1.55&1.69 & 1.62 & 1.23\\
Spot Dist & 1.16 & 0.92& 1.17 & 1.34 & 1.49\\
Activity & 0.43 & 0.45& 0.73 & 0.67 & 0.86\\
Evolution & 0.90 & 1.17 & 0.98 & 0.98 & 0.85\\
Prev Activ& 0.18 & 0.16 & 0.33 & 0.43 & 0.93\\
Complex & 0.67 & 0.77 & 0.69 & 0.69 & 0.29\\
This Pass & 0.38 & 0.35 & 0.15 & 0.11 & 0.03\\
Area & 0.12 & 0.07& 0.22 & 0.43 & 1.36\\
\bottomrule
\end{tabular}
    \caption{
The first two columns give entropy
and uniqueness Shapley for the nine
solar flare predictors.  The
next three columns cover three
types of anomalies in increasing
order of severity as described
in the text. The `Area' variable 
takes on increasing importance
as severity increases while
`This Pass' decreases.
    }
    \label{tab:solarshapley}
\end{table}

%Solar Flare values with row number (Index) inlcuded
%Index6.17
%Zurich0.83
%Large_Spot0.87
%Spot_Dist0.57
%Activity0.25
%Evolution0.61
%Prev_Activity0.09
%Complex0.41
%This_Pass0.2
%Area0.05

%  \begin{table}[t]
%  \centering
%  \begin{tabular}{c|c|c|c|c|c|c|c|c|c|}
%   \mca{}&\mca{Zurich} &\mca{Large Spot} &\mca{Spot Dist}&\mca{Activity}&\mca{Evolution}\\\cline{2-6}
%   \mcb{Shapley}& 1.37&1.55&0.92&0.45&1.17\\\cline{2-6}
%      \mcb{Entropy}&1.64&1.52&1.16&0.43&0.90\\\cline{2-6}
%   \mca{}&\mca{Prev Activity}&\mca{Complex}&\mca{This Pass}&\mca{Area}\\\cline{2-5}
%   \mcb{Shapley}&0.16&0.77&0.35&0.07\\\cline{2-5}
%      \mcb{Entropy}&0.18&0.67&0.38&0.12\\\cline{2-5}
%  \end{tabular}
%  \caption{Solar flare average uniqueness Shapley and marginal entropy.}
%  \label{tab:sfsum}
%  \end{table}

The last three columns of 
Table~\ref{tab:solarshapley} shows cardinality
Shapley for some subsets of solar regions
of increasingly anomalous nature.
They are those with at least one 
common flare,
at least one moderate flare, and, finally, 
at least one severe flare.  
The area variable which is so unimportant
globally becomes ever more identifying
for these anomalies.  
So does `previous activity'.
These variables are associated with anomalous
solar behavior in that the more extreme
the behavior, the more identifying these become.

%   \begin{table}[t]
% \centering
% \begin{tabular}{c|c|c|c|c|c|c|c|c|c|}
%   \mca{}&\mca{Zurich} &\mca{Large Spot} &\mca{Spot Dist}&\mca{Activity}&\mca{Evolution}\\\cline{2-6}
%     \mcb{Common}& 1.72&1.69&1.17&0.73&0.98\\\cline{2-6}
%      \mcb{Moderate}&1.62&1.62&1.34&0.67&0.98\\\cline{2-6}
%   \mcb{Severe}&1.81&1.23&1.49&0.86&0.85\\\cline{2-6}
%   \mca{}&\mca{Prev Activity}&\mca{Complex}&\mca{This Pass}&\mca{Area}\\\cline{2-5}
%  \mcb{Common}&0.33&0.69&0.15&0.22
%   \\\cline{2-5}
%   \mcb{Moderate}&0.43&0.69&0.11&0.43
%   \\\cline{2-5}
%       \mcb{Severe} &0.93&0.29&0.03&1.36
%   \\\cline{2-5}
% \end{tabular}
% \caption{Solar flare average uniqueness Shapley for subgroups with more than zero common, moderate, or severe flares respectively.}
% \label{tab:sfsub}
% \end{table}

\section{North Carolina voter registration data}\label{sec:nc}

Here we consider a real world demographic example in the vein of Sweeney's original work \citep{swee:2000}.  The state of North Carolina publishes voting and demographic information about all of its registered voters each election.  Some information is withheld to maintain privacy such as exact birth dates, exact addresses, and social security numbers.  From the available features, we looked at zip code, age, race, gender, and political party affiliation for $n\approx 7.5\times10^6$ registered voters in the state.  Summary uniqueness Shapley values and baseline marginal entropy values can be found in Table~\ref{tab:ncsum}.  There we can see that the relative ordering of the uniqueness Shapley values is consistent with the entropy values, but that we have large positive deviations for zip code and age. All variables had Shapley value greater than their entropy. Recall that the lower bound on the Shapley value is one fourth of the entropy, since $d=4$.

    \begin{table}[!h]
\centering
\begin{tabular}{cccccc}
\toprule
   \mca{}&\mca{Zip Code} &\mca{Race} &\mca{Party}&\mca{Gender}&\mca{Age}\\
\midrule   
   %\cline{2-6}
  \mca{Shapley}& 8.58&1.22&1.48&1.17&5.39
   \\%\cline{2-6}
    \mca{Entropy}&  6.08&1.08&1.14&0.92&4.24
   \\
\bottomrule%   \cline{2-6}
\end{tabular}
\caption{North Carolina voter registration average uniqueness Shapley and marginal entropy.  Source: \protect\url{https://www.ncsbe.gov/results-data/voter-registration-data}}
\label{tab:ncsum}
\end{table}

In Table~\ref{tab:ncdive}, we can see the average uniqueness Shapley values for various subpopulations of the state.  Some patterns follow logically from the relative class sizes, as for example, members of races that make up a smaller percentage of the population have larger uniqueness Shapley values for race on average.  Clearly membership in a less common class should help to uniquely identify someone.  Other noticeable patterns highlight the effects of the correlation structure not captured by the marginal entropy measures.  For example, the average cardinality value for race decreases for older age cohorts.

\begin{table}[t]
\centering
\begin{tabular}{lcccccc}
\toprule
   {Subgroup}&{Zip Code} &{Race} &{Party}&{Gender}&{Age}&{\% Pop.}\\
   \midrule
  {Democratic}& 8.44&1.54&1.31&1.14&5.47&36
   \\
    {Libertarian}&  8.15&1.16&6.65&1.14&4.54&\phz1
   \\
    {Republican}&  8.76&0.72&1.51&1.15&5.50&30
   \\
       {None}&  8.57&1.32&1.51&1.22&5.23&33
   \\
   \midrule
     {Female}& 8.58&1.18&1.46&0.97&5.46&49
   \\
    {Male}&  8.61&1.15&1.49&1.20&5.40&42
   \\
    {No Answer}&  8.42&1.76&1.53&2.24&4.91&\phz8
   \\
   \midrule
     {White}& 8.75&0.53&1.58&1.09&5.52&63
   \\
    {Black}&  8.28&1.70&1.14&1.10&5.35&21
   \\
    {Asian}&  7.56&5.44&1.44&1.09&5.01&\phz1
   \\
    {Native American}&  7.64&5.14&1.40&1.01&5.13&\phz1
   \\
       {No Answer}&  8.45&2.27&1.54&1.91&4.94&10
   \\
   \midrule
       {Age $<32$}&  8.49&1.52&1.52&1.26&4.63&25
   \\
    {32--48 }&  8.48&1.33&1.50&1.20&5.22&25
   \\
    {48--63}&  8.61&1.13&1.46&1.15&5.33&25
   \\
       {$>$ 63}&  8.71&0.90&1.44&1.09&6.34&25
   \\
\bottomrule
\end{tabular}
\caption{North Carolina voter registration data's
average uniqueness Shapley values for variables
aggregated over voter subgroups.  
Final column is percentage of total population.}
\label{tab:ncdive}
\end{table}

% \begin{table}[t]
% \centering
% \begin{tabular}{c|c|c|c|c|c||c|}
%   \mca{}&\mca{Zip Code} &\mca{Race} &\mca{Party}&\mca{Gender}&\mca{Age}&\mca{\% Pop.}\\\cline{2-7}
%   \mcb{Dem}& 8.44&1.54&1.31&1.14&5.47&36
%   \\\cline{2-7}
%     \mcb{Lib}&  8.15&1.16&6.65&1.14&4.54&1
%   \\\cline{2-7}
%     \mcb{Rep}&  8.76&0.72&1.51&1.15&5.50&30
%   \\\cline{2-7}
%       \mcb{None}&  8.57&1.32&1.51&1.22&5.23&33
%   \\\cline{2-7}
%   \\\cline{2-7}
%      \mcb{Female}& 8.58&1.18&1.46&0.97&5.46&49
%   \\\cline{2-7}
%     \mcb{Male}&  8.61&1.15&1.49&1.20&5.40&42
%   \\\cline{2-7}
%     \mcb{No Ans.}&  8.42&1.76&1.53&2.24&4.91&8
%   \\\cline{2-7}
%   \\\cline{2-7}
%      \mcb{White}& 8.75&0.53&1.58&1.09&5.52&63
%   \\\cline{2-7}
%     \mcb{Black}&  8.28&1.70&1.14&1.10&5.35&21
%   \\\cline{2-7}
%     \mcb{Asian}&  7.56&5.44&1.44&1.09&5.01&1
%   \\\cline{2-7}
%     \mcb{Native Am.}&  7.64&5.14&1.40&1.01&5.13&1
%   \\\cline{2-7}
%       \mcb{No Ans.}&  8.45&2.27&1.54&1.91&4.94&10
%   \\\cline{2-7}
%   \\\cline{2-7}
%       \mcb{Age$<32$}&  8.49&1.52&1.52&1.26&4.63&25
%   \\\cline{2-7}
%     \mcb{32-48 }&  8.48&1.33&1.50&1.20&5.22&25
%   \\\cline{2-7}
%     \mcb{48-63}&  8.61&1.13&1.46&1.15&5.33&25
%   \\\cline{2-7}
   
%       \mcb{63+}&  8.71&0.90&1.44&1.09&6.34&25
%   \\\cline{2-7}

% \end{tabular}
% \caption{North Carolina voter registration average uniqueness Shapley for subgroups.  Final column is percentage of total population.}
% \label{tab:ncdive}
% \end{table}

To further investigate the relationship between class imbalance and the uniqueness Shapley value, we look at zip code defined subpopulations in Figure~\ref{fig:ncrace2}.  We plot the Shapley value for race versus the proportion of the population that does not identify as white.
There are a handful of large positive outliers to this trend.  Upon further inspection, we can note that these are all zip codes with a very high proportion of American Indian voters.

%     \begin{figure}[h]
%     \begin{center}
%     \includegraphics[width=\textwidth,height=0.35\textheight,keepaspectratio]{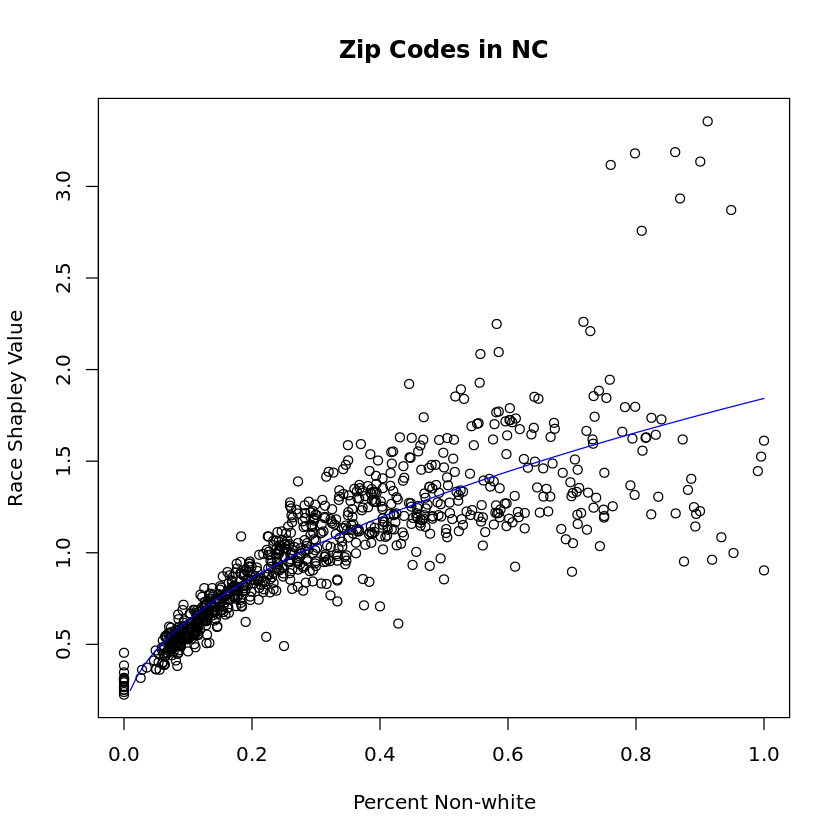}
%     \caption{uniqueness Shapley for race versus percent giving a race other than white. One point per zip code in North Carolina. A reference
%     curve is proportional to square root.
% }
%     \end{center}
%     \label{fig:ncrace}
%     \end{figure}

       \begin{figure}[h]
    \begin{center}
    \includegraphics[width=\textwidth,height=0.35\textheight,keepaspectratio]{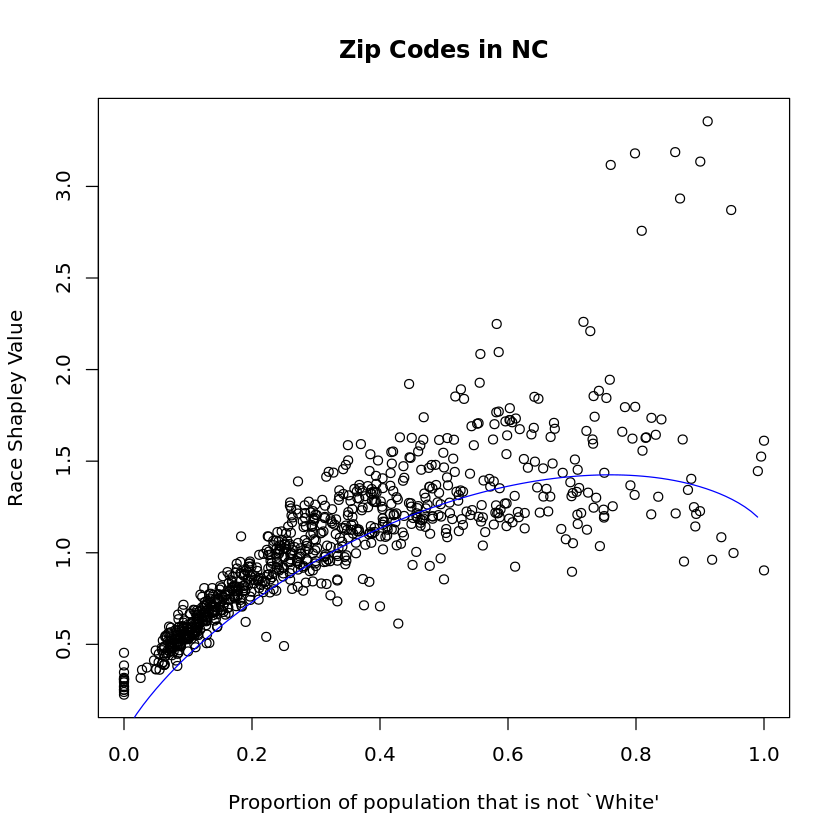}
    \caption{ \label{fig:ncrace2}
    Average uniqueness Shapley for race in a zip code versus proportion of the population in that zip code that is not `White'. The reference shows entropy of race for a hypothetical population sampled independently from North Carolina except that it has
    the implied proportion `White'.}
    \end{center}
    \end{figure}

    %   \begin{figure}[h]
    % \begin{center}
    % \includegraphics[width=\textwidth,height=0.35\textheight,keepaspectratio]{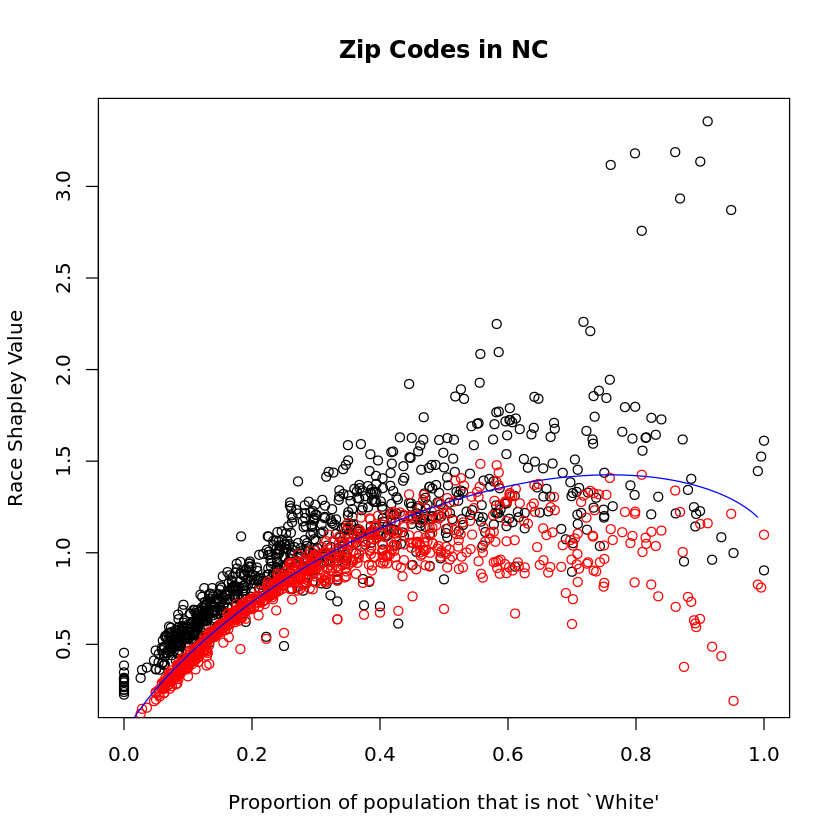}
    % \caption{Average uniqueness Shapley for race in a zip code versus proportion of the population in that zip code that is not `White'. A reference
    % curve is ...}
    % \end{center}
    % \label{fig:ncrace2}
    % \end{figure}

We can also measure the effect of feature granularity on the uniqueness Shapley values.  
Instead of recording location by zip code, we can coarsen it to the county level or even the whole state.
The results for this adjusted dataset are in Table~\ref{tab:ncgran}.
The uniqueness Shapley values for the other variables are not sensitive to this coarsening.
Similarly, instead of considering each age by year, we can coarsen the grouping level to five or ten or $25$ years and again we see no meaningful changes outside of the expected reduction in the age feature, even if we lump all ages
into one bucket.

\begin{table}[t!]
\centering
\begin{tabular}{lccccc}
\toprule
  {}& {Location} &{Race} &{Party}&{Gender}&{Age}\\
  \midrule
  {Zip Code}&8.58&1.22&1.48&1.17&5.39\\
   {County}&5.55&1.27&1.50&1.19&5.43\\
   {State} & 0.00 & 1.32 & 1.53 & 1.19 & 5.45\\
   \midrule
     {}& {Zip code} &{Race} &{Party}&{Gender}&{Age}\\
       \midrule
    {Single year}&8.58&1.22&1.48&1.17&5.39\\
         {5 year age buckets}&8.61&1.24&1.50&1.18&3.34
   \\
   {10 year}&8.62&1.24&1.50&1.19&2.39
   \\
      {25 year}&8.63&1.25&1.51&1.19&1.19
   \\
One bucket & 8.64  & 1.25 & 1.51 & 1.19& 0.00\\
   \bottomrule
\end{tabular}
 \caption{North Carolina voter registration average uniqueness Shapley. The first block coarsens location from zip code to county to state.  The second block coarsens age, starting
 with the original single year granularity.
 }
\label{tab:ncgran}
\end{table}

% \begin{table}[t]
% \centering
% \begin{tabular}{c|c|c|c|c|c|}
%   \mca{}& \mca{Zip Code} &\mca{Race} &\mca{Party}&\mca{Gender}&\mca{Age}\\\cline{2-6}
%   \mcb{Full Granularity}&8.58&1.22&1.48&1.17&5.39
% \\\cline{2-6}
%   \mcb{County Level}&5.55&1.27&1.50&1.19&5.43\\\cline{2-6}
%      \mcb{5y Age Buckets}&8.61&1.24&1.50&1.18&3.34
%   \\\cline{2-6}
%   \mcb{10y}&8.62&1.24&1.50&1.19&2.39
%   \\\cline{2-6}
%       \mcb{25y}&8.63&1.25&1.51&1.19&1.19
%   \\\cline{2-6}
% \end{tabular}
%  \caption{North Carolina voter registration average uniqueness Shapley for varying levels of granularity of age as well as location (zip code vs county).}
% \label{tab:ncgran}
% \end{table}

Beyond summary tables, we can visualize individual values in plots such as Figure~\ref{fig:ncshaps}.  Each vertical line corresponds to a single voter with the height of each colored segment representing the uniqueness Shapley value for the respective feature.  They are ordered by their overall uniqueness, and we only display every hundredth voter so they fit in the figure.

    %     \begin{figure}[h]
    % \begin{center}
    % \includegraphics[width=1.5\textwidth,height=0.35\textheight,keepaspectratio]{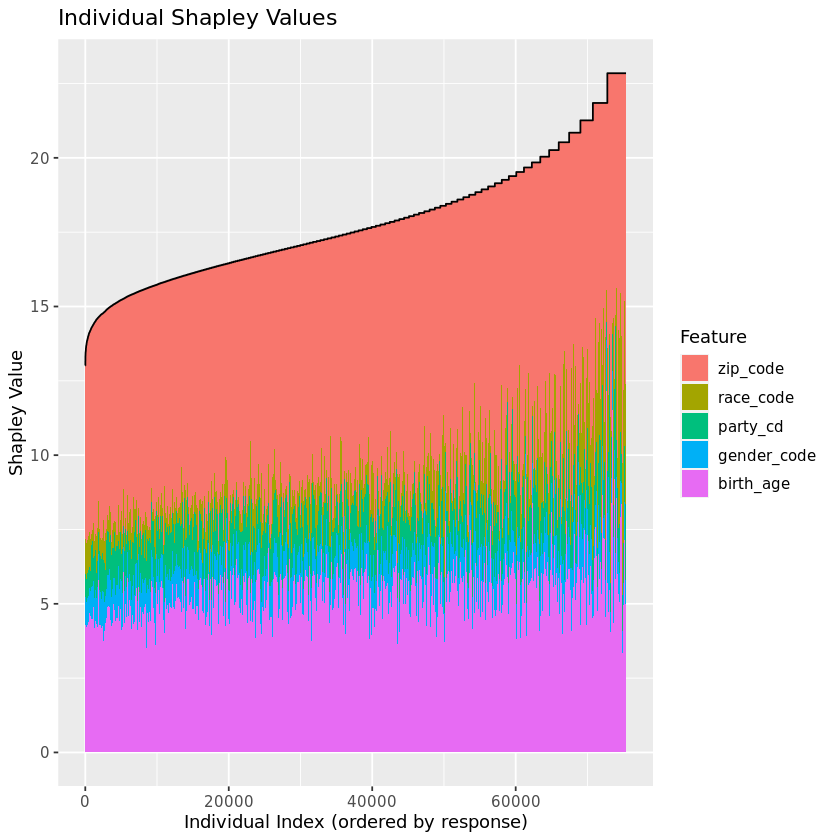}
    % \caption{\label{fig:ncshaps-old} % we need the label inside the caption to get the right figure number
    % Individual Shapley values stacked vertically (subsampled every 100).}
    % \end{center}
  
    % \end{figure}

           \begin{figure}[h]
     \begin{center}
     \includegraphics[height=0.3\textheight,keepaspectratio]{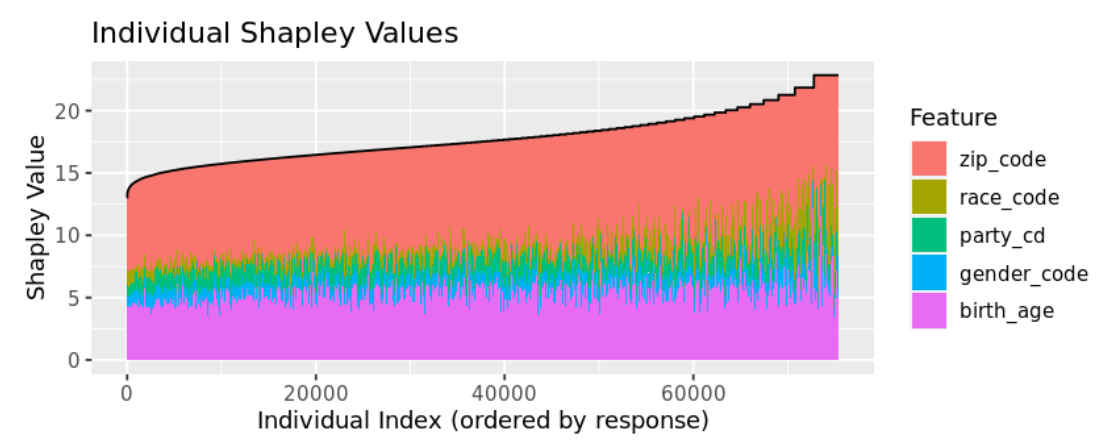}
     \caption{\label{fig:ncshaps} % we need the label inside the caption to get the right figure number
     Individual Shapley values stacked vertically (subsampled every 100).}
     \end{center}
  
     \end{figure}

\section{Discussion}\label{sec:discussion}

We have used Shapley value to quantify and compare the power that different categorical variables have to identify a subject.  When we use the additive property of Shapley value to average this measure over a population we get an expression that equals a weighted sum of conditional entropies.  Variables that when revealed increase entropy are the ones that most identify subjects.  When we average over a different distribution, such as a sub-population of interest, the entropies are replaced by cross entropies.

The extent to which a variable makes you unique
depends on the order in which it and other
variables are revealed.  Some variables might, if revealed
last, be very identifying.  Other variables might be 
redundant if revealed last but very informative if
revealed early due to associations among the variables.
The Shapley formulation combines 
all of the orders
in which a variable might be revealed.

In this work we have kept to data sets with a modest number $d$ of variables because computation of Shapley value can include a cost that grows proportionally to $d2^d$, the number of cohorts a subject might belong to.  There are Monte Carlo sampling algorithms for Shapley value that allow larger $d$.

We have focused on categorical variables.  Continuous 
variables can be coarsened into categorical ones by
setting ranges. The finer the range the greater the Shapley value is. This is appropriate because finer classifications really are more revealing.

It is also possible
to use asymmetric notions of symmetry
for continuous variables as considered in \cite{mase:owen:seil:2019}.
For instance if we declare $x_{ij}$ to
be similar to $x_{tj}$ whenever $|x_{ij}-x_{tj}|\le \delta_j|x_{tj}|$
we might find that $x_{ij}$ is similar to $x_{tj}$
but not the converse.  The proper way to account for a continuous variable when quantifying uniqueness depends on how we expect that variable might be revealed.  We can compare the effects of revealing age in 1 or 5 or 10 year windows and can also measure how the effect of revealing another variable such as race or gender depends on the granularity with which age has been revealed.

% \section*{odds and ends}
% Thanks Ben.  This is all we need.  We don't have to do a disaggregated Marichal entropy, at least not in this one.  We just needed to show that uniqueness Shapley is not the same.
% Marichal notion of entropy for fuzzy measures.  Suppose $\mu$ is a fuzzy measure on $N$, let $h(x)=-x\log x$, and let $\gamma_u=\frac{1}{n}{n-1 \choose |u|}^{-1}$.

% Lower Marichal entropy:

% $$H_{l}(\mu)=\sum_{j\in N}\sum_{u\subseteq -j}\gamma_{u}h(\mu(u+j)-\mu(u))$$

% Upper Marichal entropy:

% $$H_{u}(\mu)=\sum_{j\in N}h\left(\sum_{u\subseteq -j}\gamma_{u}(\mu(u+j)-\mu(u))\right)=\sum_{j\in N}h(\phi_j(\mu))$$

% We could define a disaggregate lower Marichal entropy:

% $$H^{j}_{l}(\mu)=\sum_{u\subseteq -j}\gamma_{u}h(\mu(u+j)-\mu(u))$$

% In this notation we have the uniqueness Shapley in the form:

% $$\phi^{1:s}_{j}=\sum_{u\subseteq -j}\gamma_{u}\mathcal{H}(j|u)=\sum_{u\subseteq -j}\gamma_{u}\left(\mathcal{H}(u+j)-\mathcal{H}(u)\right)$$

\section*{Acknowledgments}

This work was supported
by the U.S.\ National Science Foundation
under projects IIS-1837931 and DMS-2152780
and by a grant from Hitachi, Ltd.
We thank two anonymous reviewers for helpful comments.

\bibliographystyle{apalike}
\bibliography{uniqueness}

\begin{thebibliography}{}

\bibitem[Chandola et~al., 2009]{chan:bane:kuma:2009}
Chandola, V., Banerjee, A., and Kumar, V. (2009).
\newblock Anomaly detection: A survey.
\newblock {\em ACM computing surveys (CSUR)}, 41(3):1--58.

\bibitem[Cover and Thomas, 2006]{cove:thom:2006}
Cover, T. and Thomas, J. (2006).
\newblock {\em {Elements of Information Theory}}.
\newblock John Wiley and Sons, New York.

\bibitem[Ding, 2018]{ding:2018}
Ding, F. (2018).
\newblock Sparse {AD}-tree package in {Python}.
\newblock https://github.com/ uraplutonium/adtree-py.

\bibitem[Dua and Graff, 2017]{dua:graf:2019}
Dua, D. and Graff, C. (2017).
\newblock {UCI} machine learning repository.

\bibitem[Dwork, 2008]{dwor:2008}
Dwork, C. (2008).
\newblock Differential privacy: A survey of results.
\newblock In {\em International conference on theory and applications of models
  of computation}, pages 1--19. Springer.

\bibitem[Golle, 2006]{goll:2006}
Golle, P. (2006).
\newblock Revisiting the uniqueness of simple demographics in the {US}
  population.
\newblock In {\em Proceedings of the 5th ACM Workshop on Privacy in Electronic
  Society}, pages 77--80.

\bibitem[G{\'o}mez-Boix et~al., 2018]{gomez2018hiding}
G{\'o}mez-Boix, A., Laperdrix, P., and Baudry, B. (2018).
\newblock Hiding in the crowd: an analysis of the effectiveness of browser
  fingerprinting at large scale.
\newblock In {\em Proceedings of the 2018 world wide web conference}, pages
  309--318.

\bibitem[Jacob et~al., 2020]{jaco:etal:2020}
Jacob, V., Song, F., Stiegler, A., Diao, Y., and Tatbul, N. (2020).
\newblock {AnomalyBench}: An open benchmark for explainable anomaly detection.
\newblock Technical report, arXiv:2010.05073.

\bibitem[Maleki et~al., 2013]{male:2013}
Maleki, S., Tran-Thanh, L., Hines, G., Rahwan, T., and Rogers, A. (2013).
\newblock Bounding the estimation error of sampling-based {Shapley} value
  approximation.
\newblock Technical report, arXiv:1306.4265.

\bibitem[Mase et~al., 2019]{mase:owen:seil:2019}
Mase, M., Owen, A.~B., and Seiler, B.~B. (2019).
\newblock Explaining black box decisions by {Shapley} cohort refinement.
\newblock Technical report, arXiv:1911.00467.

\bibitem[Moore and Lee, 1998]{moor:lee:1998}
Moore, A. and Lee, M.~S. (1998).
\newblock Cached sufficient statistics for efficient machine learning with
  large datasets.
\newblock {\em Journal of Artificial Intelligence Research}, 8:67--91.

\bibitem[Shapley, 1953]{shap:1953}
Shapley, L.~S. (1953).
\newblock A value for n-person games.
\newblock In Kuhn, H.~W. and Tucker, A.~W., editors, {\em Contribution to the
  Theory of Games II (Annals of Mathematics Studies 28)}, pages 307--317.
  Princeton University Press, Princeton, NJ.

\bibitem[Sweeney, 2000]{swee:2000}
Sweeney, L. (2000).
\newblock Simple demographics often identify people uniquely.
\newblock Privacy Working Paper~3, Carnegie Mellon University.

\end{thebibliography}

\end{document}